
\documentclass{llncs}

\usepackage[utf8]{inputenc}
\usepackage{amssymb}
\usepackage{amsmath}
\usepackage{enumerate}
\usepackage{cite}
\usepackage{times}
\usepackage[T1]{fontenc}
\usepackage[utf8]{inputenc}
\usepackage{amsthm}
\usepackage{amstext}
\usepackage{authblk}
\usepackage{bbm} 
\usepackage{mathrsfs} 
\usepackage{color}

\usepackage{hyperref}

\newcommand\vp{\ensuremath{\mathsf{VP}}}
\newcommand\vnp{\ensuremath{\mathsf{VNP}}}
\newcommand\p{\ensuremath{\mathsf{P}}}
\newcommand\np{\ensuremath{\mathsf{NP}}}

\newcommand{\N}{\mathbb{N}}

\newcommand{\Z}{\mathbb{Z}}

\newcommand{\R}{\mathbb{R}}

\renewcommand{\leq}{\leqslant}

\renewcommand{\geq}{\geqslant}

\title{Log-concavity and lower bounds for arithmetic circuits}
\author{Ignacio Garc\'{\i}a-Marco\inst{1}\thanks{This work was
    supported by ANR  project CompA (project number: ANR-13-BS02-0001-01).} \and Pascal
  Koiran\inst{1}$^\star$ \and S\'ebastien Tavenas\inst{2}$^\star$}
\institute{LIP, ENS Lyon, France \thanks{UMR 5668 ENS Lyon - CNRS - UCBL - INRIA, Universit\'e de Lyon}\\
  \email{ignacio.garcia-marco@ens-lyon.fr, pascal.koiran@ens-lyon.fr}
  \and Max-Planck-Insitut f\"ur
  Informatik, Saarbr\"{u}cken, Germany \\ \email{ stavenas@mpi-inf.mpg.de}}

\begin{document}

\maketitle

\begin{abstract}
One question that we investigate in this paper is, how can we build
log-concave polynomials using sparse polynomials as building blocks?
More precisely, let $f = \sum_{i = 0}^d a_i X^i \in \R^+[X]$ be a
polynomial satisfying the log-concavity condition $a_i^2 > \tau
a_{i-1}a_{i+1}$ for every $i \in \{1,\ldots,d-1\},$ where $\tau >
0$. Whenever $f$ can be written under the form $f = \sum_{i = 1}^k
\prod_{j = 1}^m f_{i,j}$ where the polynomials $f_{i,j}$ have at
most $t$ monomials, it is clear that $d \leq k t^m$. Assuming that
the $f_{i,j}$ have only non-negative coefficients, we improve this
degree bound to $d = \mathcal O(k m^{2/3} t^{2m/3} {\rm log^{2/3}}(kt))$ if
$\tau > 1$, and to $d \leq kmt$ if $\tau = d^{2d}$.

This investigation has a complexity-theoretic motivation:
we show that a suitable strengthening of the above results would imply
a separation of the algebraic complexity classes $\vp$ and $\vnp$.
 As they currently stand, these
results are strong enough to provide a new example of a family of
polynomials in $\vnp$ which cannot be computed by monotone arithmetic
circuits of polynomial size.
\end{abstract}

\section{Introduction}\label{introduction}


Let $f = \sum_{j = 0}^d a_j X^j\in \R[X]$ be a univariate polynomial
of degree $d \in \mathbb{Z}^+$.  It is a classical result due to
Newton (see \cite{HLP}, \S2.22 and \S4.3 for two proofs) that
whenever all the roots of $f$ are real, then the coefficients of $f$
satisfy the following log-concavity condition:

\begin{equation} a_i^2 \geq \frac{d-i+1}{d-i} \frac{i+1}{i}\, a_{i-1} a_{i+1}
{\rm \ for \ all \ } i \in
\{1,\ldots,d-1\}.\label{newton}\end{equation} Moreover, if the roots
of $f$ are not all equal, these inequalities are strict. When $d =
2$, condition (\ref{newton}) becomes $a_1 \geq 4 a_0 a_2$, which is
well known to be a necessary and sufficient condition for all the
roots of $f$ to be real. Nevertheless, for $d \geq 3$, the converse
of Newton's result does not hold any more~\cite{Kurtz}.

\medskip

When $f \in \R^+[X]$, i.e., when $f = \sum_{j = 0}^d a_j X^j$ with
$a_j \geq 0$ for all $j \in \{0,\ldots,d\}$, a weak converse of
Newton's result holds true. Namely, a sufficient condition for $f$
to only have real (and distinct) roots is that
$$a_i^2
> 4 a_{i-1} a_{i+1} {\rm \ for\  all}\ i \in \{1,\ldots,d-1\}.$$
Whenever a polynomial fulfills this condition, we
say that it satisfies the {\it Kurtz condition} since this converse
result is often attributed to Kurtz~\cite{Kurtz}.
Note however that it was obtained some 70 years earlier by Hutchinson~\cite{Hutchinson}.
\medskip

If $f$ satisfies the Kurtz condition, all of its $d+1$ coefficients
are nonzero except possibly the constant term. Such a polynomial is
therefore very far from being sparse (recall that a polynomial is
informally called {\em sparse} if the number of its nonzero coefficients
is small compared to its degree).
One question that we investigate in this paper is: how can we
construct polynomials satisfying the Kurtz condition using sparse
polynomials as building blocks?
More precisely, consider $f$ a polynomial of the form
\begin{equation}\label{sumprod} f = \sum_{i = 1}^k \prod_{j = 1}^m
f_{i,j}\end{equation} where $f_{i,j}$ are polynomials with at most
$t$ monomials each. By expanding the products in~(\ref{sumprod}) we
see that $f$ has at most $k t^m$ monomials.
As a result, $d \leq k t^m$ if $f$ satisfies the Kurtz condition.
Our goal is to improve this very coarse bound. For the case of
polynomials $f_{i,j}$ with nonnegative coefficients, we obtain the
following result.

\begin{theorem}\label{bound}
Consider a polynomial $f \in \R^+[X]$ of degree $d$ of the form $$f
= \sum_{i = 1}^k \prod_{j = 1}^m f_{i,j},$$ where $m \geq 2$ and the
$f_{i,j} \in \R^+[X]$ have at most $t$ monomials. If $f$ satisfies
the Kurtz condition, then $d = \mathcal O(k m^{2/3} t^{2m/3} {\rm
log^{2/3}}(kt))$.
\end{theorem}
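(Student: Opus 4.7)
The plan is to reduce the degree bound to an upper bound on the number of real roots of a sum of products of sparse polynomials, and then to invoke a known real-root bound of real-$\tau$-conjecture type. The first ingredient is the Hutchinson/Kurtz converse recalled just before the statement: the hypothesis $a_i^2 > 4 a_{i-1} a_{i+1}$ for all $i \in \{1,\ldots,d-1\}$, combined with $f \in \R^+[X]$, forces $f$ to have $d$ distinct real roots. Since the coefficients of $f$ are all non-negative, Descartes' rule of signs further implies that $f$ has no positive real root, so these $d$ roots all lie in $(-\infty,0]$.

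The second ingredient is a change of variable: consider $\tilde{f}(X) := f(-X)$, which can be written as $\tilde{f} = \sum_{i=1}^k \prod_{j=1}^m \tilde{f}_{i,j}$ with $\tilde{f}_{i,j}(X) := f_{i,j}(-X)$. Each $\tilde{f}_{i,j}$ still has at most $t$ monomials (though now with real, possibly negative, coefficients), and $\tilde{f}$ inherits $d$ distinct real, now non-negative, roots from $f$. Hence the total number of real roots of $\tilde{f}$ is at least $d$.

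The third and crucial ingredient is a known quantitative bound (of Koiran--Portier--Tavenas type, proved by Wronskian techniques) on the number of real roots of any polynomial of the form $\sum_{i=1}^k \prod_{j=1}^m h_{i,j}$ with each $h_{i,j}$ being $t$-sparse: this number is $O(k m^{2/3} t^{2m/3} \log^{2/3}(kt))$. Applied to $\tilde{f}$, it immediately yields $d = O(k m^{2/3} t^{2m/3} \log^{2/3}(kt))$, as required. The entire difficulty of the plan sits in this third step: the reduction from a degree bound for Kurtz polynomials to a real-root bound is essentially cost-free, but the real-root bound itself for sums of products of sparse polynomials is a deep result --- a partial resolution of the real $\tau$-conjecture --- and it is the key technical input whose statement one simply has to assume here rather than reprove.
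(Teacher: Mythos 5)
The plan collapses at your ``third and crucial ingredient.'' There is no known theorem bounding the number of real roots of a sum $\sum_{i=1}^k\prod_{j=1}^m h_{i,j}$ of products of $t$-sparse real polynomials by $O(k m^{2/3} t^{2m/3}\log^{2/3}(kt))$. This would be a dramatic and unproved strengthening of what is actually known towards the real $\tau$-conjecture: the Wronskian bounds of Koiran--Portier--Tavenas have an entirely different (much worse) dependence on the parameters, and the only result with the characteristic $t^{2m/3}$ and $2/3$-exponent shape is the Newton polygon bound of Koiran--Portier--Tavenas--Thomass\'e (reproduced in the paper as Theorem~\ref{mpolys}), which bounds the number of \emph{edges of the Newton polygon} of a bivariate sum of products --- a quantity with no direct relationship to the number of real roots. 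You appear to have pattern-matched the target bound onto a phantom real-root theorem. The paper's discussion section makes this explicit: bounding real roots of such sums was \emph{suggested} as an approach in~\cite{Koi10a}, and the present Theorem~\ref{bound} ``certainly falls short of this goal.'' Your steps~1 and~2 (Kurtz $\Rightarrow$ $d$ distinct nonpositive real roots, then $X\mapsto -X$) are correct but transfer the entire content of the theorem onto the nonexistent step~3; as a proof, nothing has been established.

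For comparison, the paper never touches real roots at all. Its engine is a convex-geometry observation: the Kurtz condition means the sequence $\log a_i$ is strictly concave, so the points $(i,\log a_i)$, $1\le i\le d$, are \emph{convexly independent}; and the sum-of-products structure with nonnegative coefficients (so no cancellations, only possible upward shifts of at most $\log(kr)$) lets one show that suitable perturbations of these points lie in $\bigcup_{i=1}^k (R_i+S_i)+Q_1+Q_2$, a union of Minkowski sums of small planar sets, after splitting the $m$ factors into two groups of size $\approx m/2$. The bound then follows from the Eisenbrand--Pach--Rote--Sharir theorem (Theorem~\ref{convex}) on convexly independent subsets of a Minkowski sum, plus a discretization of the slack $\log(kr)$ (the sets $Q_1,Q_2$). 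This is why the hypotheses matter: the nonnegativity of the $f_{i,j}$ is what prevents cancellation, and the strict inequality in the Kurtz condition supplies the $\epsilon=\log(4)/2$ slack needed for the discretization. Your reduction to real roots discards both of these features and replaces them with an assumption strictly stronger than the theorem you are trying to prove.
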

We prove this result in Section \ref{kurtzsection}. After that, in
Section \ref{strongsection}, we study the following stronger
log-concavity condition
\begin{equation} \label{stronglogconcave}a_i^2
> d^{2d} a_{i-1} a_{i+1} {\rm \ for\  all}\ i \in \{1,\ldots,d-1\}.\end{equation} In
this setting we prove the following improved analogue of Theorem
\ref{bound}.

\begin{theorem}\label{bound2} Consider a
polynomial $f \in \R^+[X]$ of degree $d$ of the form $$f = \sum_{i =
1}^k \prod_{j = 1}^m f_{i,j},$$ where $m \geq 2$ and the $f_{i,j}
\in \R^+[X]$ have at most $t$ monomials. If $f$
satisfies~$(\ref{stronglogconcave})$, then $d \leq k m t$.
\end{theorem}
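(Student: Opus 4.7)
The plan is to study the function $\Phi(s) := \log f(e^s)$ through two different piecewise-linear-convex (PWL) approximations---one derived from the coefficient expansion $f = \sum_i a_i X^i$, and one from the sum-of-products expression $f = \sum_i \prod_j f_{i,j}$. The strong log-concavity hypothesis will be calibrated exactly to force these two approximations to have the same number of linear pieces, yielding the bound $d \leq kmt$.

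For the first picture, the log-sum-exp identity gives $\Phi(s) \in [\bar\phi(s), \bar\phi(s) + \log(d+1)]$ with $\bar\phi(s) := \max_{0 \leq i \leq d}(\log a_i + is)$. Strong log-concavity forces $a_i > 0$ for $1 \leq i \leq d-1$, so $\bar\phi$ is PWL convex with at least $d$ pieces of distinct integer slopes in $\{0,1,\dots,d\}$, and its breakpoints $s_i = \log(a_i/a_{i+1})$ are separated by strictly more than $2d\log d$ (since $a_i^2 > d^{2d}a_{i-1}a_{i+1}$). For the second picture, each $\log f_{i,j}(e^s)$ is within $\log t$ of $\bar\psi_{i,j}(s) := \max_\ell(\log c_{i,j,\ell} + e_{i,j,\ell}s)$, which is PWL convex with $\leq t$ pieces. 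Summing over $j$, $\sum_j \log f_{i,j}(e^s)$ is within $m\log t$ of a PWL convex function with $\leq m(t-1)+1$ pieces (each summand adds at most $t-1$ new breakpoints); taking the upper envelope over $i$, $\Phi$ is within $\log k + m\log t$ of $\bar\Phi(s) := \max_i \sum_j \bar\psi_{i,j}(s)$, which is PWL convex with at most $k(m(t-1)+1) \leq kmt$ pieces.

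Combining the two estimates, $|\bar\phi(s) - \bar\Phi(s)| \leq E$ uniformly in $s$, with $E := \log k + m\log t + \log(d+1)$. Suppose some slope $i \in \{1, \ldots, d-1\}$ of $\bar\phi$ is missing from $\bar\Phi$'s slope set. Then on the interval $I_i = (s_{i-1}, s_i)$, of length $L_i > 2d\log d$, the PWL convex function $h := \bar\Phi - \bar\phi$ has integer slopes all different from zero (since $\bar\Phi$ avoids slope $i$, each slope of $h$ on $I_i$ is of the form $\beta - i$ with $\beta \neq i$). A short convexity case analysis---slopes of $h$ all negative, all positive, or transitioning from negative to positive at some interior point---shows that $h$ must vary by at least $L_i/2$ on $I_i$. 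Combined with $|h|\leq E$, this forces $L_i \leq 4E$, contradicting $L_i > 2d\log d$ as soon as $d\log d > 2E$. In this regime, every slope of $\bar\phi$ must appear in $\bar\Phi$, yielding $d+1 \leq kmt$; in the complementary regime $d\log d \leq 2E$, direct estimation gives $d = O(m\log t + \log k) \leq O(kmt)$.

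The main obstacle is the convexity-based variation bound on $h$: establishing that ``$h$ varies by at least $L_i/2$'' over the full interval $I_i$ requires carefully handling the three sub-cases for how the slopes of $\bar\Phi$ can avoid the slope $i$. The precise hypothesis $\tau = d^{2d}$ is tuned so that the breakpoint separation $2d\log d$ strictly dominates the PWL-approximation error $O(m\log t + \log k)$; a weaker separation would not suffice, which is consistent with the weaker bound obtained in Theorem~\ref{bound} under the much milder Kurtz-like hypothesis $\tau > 1$.
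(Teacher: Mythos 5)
Your proposal is essentially a \emph{dual} of the paper's argument, and it is basically sound, though it loses a constant factor in the final bound.

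The paper works in the primal picture: to each sparse factor $f_{i,j}$ it attaches the planar point set $R_{i,j}=\{(\ell,\log c_{i,j,\ell})\}$, observes that the points $(\ell,\log C_\ell)$ (with $C_\ell$ the largest single tropical monomial contributing to $a_\ell$) lie on the \emph{upper convex envelope} of $\bigcup_i(R_{i,1}+\cdots+R_{i,m})$, pigeonholes to a single $i_0$ contributing at least $d/k$ of these extreme points, and then invokes Lemma~\ref{convexhull} (the convex hull of a Minkowski sum of planar sets of sizes $r_1,\dots,r_m$ has at most $r_1+\cdots+r_m$ vertices) to force some $|R_{i_0,j_0}|\geq d/(km)$. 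Your picture is obtained by applying the Legendre transform to all of this: the upper convex hull of $R_{i,j}$ becomes the PWL convex support function $\bar\psi_{i,j}$, the Minkowski sum becomes the pointwise sum $\sum_j\bar\psi_{i,j}$, the union over $i$ becomes $\max_i$, Lemma~\ref{convexhull} becomes your observation that the sum has at most $m(t-1)+1$ pieces, and the convex-independence of $(\ell,\log C_\ell)$ becomes your breakpoint-spacing statement. So the two proofs are genuinely different presentations of the same geometry, and your convexity case analysis (monotone integer slopes of $h=\bar\Phi-\bar\phi$ force variation $\geq L_i/2$) is correct. One thing your route buys is that it dispenses with the explicit pigeonhole to $i_0$: you just count the pieces of $\bar\Phi$ directly.

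The genuine issue is at the very end. In your ``main'' regime $d\log d>2E$ the contradiction fires and you get $d+1\leq$ (number of pieces of $\bar\Phi$) $\leq k(m(t-1)+1)\leq kmt$, which is fine. But in the complementary regime $d\log d\leq 2E$ your ``direct estimation gives $d=O(m\log t+\log k)\leq O(kmt)$'' only yields $d=O(kmt)$ with an unspecified constant, not the stated $d\leq kmt$, and the regime can actually occur when $d$ is only slightly larger than $kmt$ (the inequalities do not close cleanly as written). The paper avoids this by a different, sharper case split: if $d\leq k$ or $d\leq m$ the bound is trivial, and otherwise $d^{2d}>k^2d^{2m}$ makes the hypothesis of Proposition~\ref{SPS} hold with no slack to lose. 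To repair your proof you would want a similar split (e.g.\ WLOG $t\leq d+1$, WLOG $k<d$ and $m<d$, then verify $2d\log d>4E$ holds outright, handling the small leftover range of $d$ by hand), or tighten the $\pm\log(\cdot)$ approximation errors so the comparison closes without the factor-of-two loss from the three-case variation bound. As the proposal stands, the exact inequality $d\leq kmt$ is not established.
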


 This
investigation has a complexity-theoretic motivation: we show in
Section~\ref{complexity} that a suitable 
extension of 
Theorem~\ref{bound2} (allowing negative coefficients for the
polynomials $f_{ij}$) would imply
a separation of the algebraic complexity classes $\vp$ and $\vnp$.
The classes $\vp$ of ``easily computable polynomial families'' and
$\vnp$ of ``easily definable polynomial families'' were proposed by
Valiant~\cite{Val79} as algebraic analogues of $\p$ and $\np$. As
shown in Theorem~\ref{monotone}, Theorem~\ref{bound2} as it now
stands is strong enough to provide a new example of a family of
polynomials in $\vnp$ which cannot be computed by monotone
arithmetic circuits of polynomial size.

\section{The Kurtz log-concavity condition}\label{kurtzsection}

Our main tool in this section is a result of convex geometry
\cite{EPRS}. To state this result, we need to introduce some
definitions and notations. For a pair of planar finite sets $R, S
\subset \R^2$, the {\it Minkowski sum} of $R$ and $S$ is the set $R
+ S := \{y + z \, \vert \, y \in R, z \in S\} \subset \R^2$. A
finite set $C \subset \R^2$ is {\it convexly independent} if and
only if its elements are vertices of a convex polygon. The following
result provides an upper bound for the number of elements of a
convexly independent set contained in the Minkowski sum of two other
sets.

\begin{theorem}\cite[Theorem 1]{EPRS}\label{convex} Let $R$ and $S$ be two planar point sets with $\vert
R \vert = r$ and $\vert S \vert = s$. Let $C$ be a subset of the
Minkowski sum $R + S$. If $C$ is convexly independent we have that
$\vert C \vert = \mathcal O(r^{2/3} s^{2/3} + r + s)$.
\end{theorem}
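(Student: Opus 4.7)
My plan is to derive the bound by reducing to the Szemerédi--Trotter point-line incidence theorem, motivated by the exact match between the $O((mn)^{2/3}+m+n)$ form of both bounds. The strategy is to extract, from the convex independence of $C\subseteq R+S$, an incidence structure on which Szemerédi--Trotter can be applied.

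For each $c\in C$ I would fix a representation $c=\rho(c)+\sigma(c)$ with $\rho(c)\in R$, $\sigma(c)\in S$, enumerate $C=(c_1,\ldots,c_n)$ cyclically around $\partial\,\mathrm{conv}(C)$, and decompose the edge vectors as $e_i:=c_{i+1}-c_i=\alpha_i+\beta_i$ with $\alpha_i:=\rho(c_{i+1})-\rho(c_i)\in R-R$ and $\beta_i:=\sigma(c_{i+1})-\sigma(c_i)\in S-S$. Convex independence of $C$ guarantees that the $n$ edge directions $e_i/|e_i|$ are pairwise distinct and cyclically ordered. For each $i$, at least one of $|\alpha_i|,|\beta_i|$ is at least $|e_i|/2$; by pigeonhole, at least $n/2$ edges are (say) \emph{$R$-heavy} with $|\alpha_i|\ge|e_i|/2$, and for these $\alpha_i$ lies within $60^\circ$ of the edge direction. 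Bucketing into $O(1)$ angular windows splits the $R$-heavy edges into arcs along which the $\alpha_i$-directions are monotone, and hence come from distinct pairs in $R\times R$.

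On each such monotone arc I would set up the following incidence problem: each $\alpha_i$ is the difference of two points of $R$, and the cyclic-convex ordering together with the $\beta_i$-data forces each $R$-heavy edge to correspond to an incidence between a point of $R$ and an auxiliary line determined by accumulated $\sigma$-information. Szemerédi--Trotter then bounds the number of $R$-heavy edges by $O(r^{2/3}s^{2/3}+r+s)$; a symmetric argument handles the $S$-heavy edges, and summing gives $n=O(r^{2/3}s^{2/3}+r+s)$. The main obstacle is the faithful construction of this incidence structure, that is, turning cyclic convexity into a family of \emph{genuine} lines (or at worst pseudolines with few pairwise crossings). A likely technical ingredient is a cumulative-sum telescoping $c_i=c_1+\sum_{j<i}(\alpha_j+\beta_j)$ that routes distinct $\alpha_j$-pairs through distinct auxiliary lines; verifying that these lines are genuinely few in number, and that no two share too many points of $R$, is the delicate part. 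Once that verification is in place, the stated bound is immediate from Szemerédi--Trotter.
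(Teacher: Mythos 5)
A preliminary remark: the paper does not prove Theorem~\ref{convex} at all --- it is quoted from the cited reference \cite{EPRS} --- so there is no internal proof to compare against; your proposal has to stand on its own, and as it stands it has two genuine gaps. The first is a step that is simply false. From $e_i=\alpha_i+\beta_i$ the triangle inequality does give that at least one of $\lvert\alpha_i\rvert,\lvert\beta_i\rvert$ is at least $\lvert e_i\rvert/2$, but this carries no directional information: being ``$R$-heavy'' in your sense does not place $\alpha_i$ within $60^\circ$ (or any bounded angle) of the direction of $e_i$, because $\beta_i$ may be far longer than $e_i$ and nearly opposite to $\alpha_i$ (take $e_i=(1,0)$, $\alpha_i=(100,0)$, $\beta_i=(-99,0)$). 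The dichotomy you would actually need --- that for each edge one of the two parts is \emph{short} compared with $e_i$ --- is unavailable, since $\alpha_i\in R-R$ and $\beta_i\in S-S$ are completely unconstrained in length. With that, the constant-size angular bucketing, the claimed monotonicity of the $\alpha_i$ directions along arcs, and the conclusion that the heavy edges come from distinct pairs of $R\times R$ all collapse.

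The second gap is the one you flag yourself: the ``auxiliary lines determined by accumulated $\sigma$-information'' are never constructed, and it is not at all clear they can be. A Szemer\'edi--Trotter-type bound needs genuine lines, or at least curves with bounded pairwise intersections and bounded degrees of freedom, and moreover only about $s$ of them if the term $r^{2/3}s^{2/3}$ is to come out; the telescoping $c_i=c_1+\sum_{j<i}(\alpha_j+\beta_j)$ mixes the $R$- and $S$-data and does not by itself yield such a family incident to points of $R$. Your instinct that the shape of the bound points to incidence/crossing-number machinery is reasonable, and the cited reference \cite{EPRS} does obtain the theorem by an explicit argument of that general flavour (exploiting the convex-position structure of the elements of $C$ sharing a summand, rather than heavy/light edge directions); but in your write-up the crucial reduction is exactly the part left unverified, so the proposal is a plan rather than a proof.
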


\medskip From this result the following corollary follows easily.

\begin{corollary}\label{maxconvex}Let $R_1,\ldots,R_k,S_1,\ldots,S_k,Q_1,Q_2$ be planar point sets with $\vert
R_i \vert = r, \ \vert S_i \vert = s$ for all $i \in \{1,\ldots,k\}$,
$\lvert Q_1\rvert = q_1$
and $\lvert Q_2 \rvert = q_2$. Let $C$ be a subset of $\cup_{i = 1}^k (R_i
+ S_i) + Q_1+Q_2$. If $C$ is convexly independent, then $\vert C \vert =
\mathcal O(k r^{2/3} s^{2/3} q_1^{2/3}q_2^{2/3} + k r q_1 + k s q_2)$.
\end{corollary}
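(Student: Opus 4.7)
The plan is to reduce the statement to Theorem \ref{convex} by partitioning $C$ according to which of the $k$ summands $R_i+S_i$ a point lies in, and by absorbing $Q_1$ into the $R_i$ and $Q_2$ into the $S_i$ via the associativity and commutativity of Minkowski addition.

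First, I would write
\[
C_i := C \cap \bigl((R_i + S_i) + Q_1 + Q_2\bigr),
\]
so that $C = \bigcup_{i=1}^k C_i$. Since $C$ is convexly independent, so is every subset $C_i$: the vertices of any subset of a convex polygon are again in convex position.

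Next, I would use the identity
\[
(R_i + S_i) + Q_1 + Q_2 = (R_i + Q_1) + (S_i + Q_2),
\]
which sets up an application of Theorem \ref{convex}. Writing $R_i' := R_i + Q_1$ and $S_i' := S_i + Q_2$, one has $\lvert R_i' \rvert \leq r q_1$ and $\lvert S_i' \rvert \leq s q_2$, and $C_i \subset R_i' + S_i'$. Theorem \ref{convex} then yields
\[
\lvert C_i \rvert = \mathcal O\bigl((r q_1)^{2/3}(s q_2)^{2/3} + r q_1 + s q_2\bigr)
= \mathcal O\bigl(r^{2/3} s^{2/3} q_1^{2/3} q_2^{2/3} + r q_1 + s q_2\bigr).
\]

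Finally, summing $\lvert C_i \rvert$ over $i = 1,\ldots,k$ gives the claimed bound
\[
\lvert C \rvert \leq \sum_{i=1}^k \lvert C_i \rvert = \mathcal O\bigl(k r^{2/3} s^{2/3} q_1^{2/3} q_2^{2/3} + k r q_1 + k s q_2\bigr).
\]
There is no real obstacle here; the only ``choice'' is to pair $Q_1$ with the $R_i$ and $Q_2$ with the $S_i$ rather than grouping both $Q_1,Q_2$ on one side, since this is what produces the symmetric $q_1^{2/3}q_2^{2/3}$ factor in the dominant term instead of a less balanced bound.
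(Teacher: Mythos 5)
Your proof is correct and follows essentially the same route as the paper: rewrite $(R_i+S_i)+Q_1+Q_2$ as $(R_i+Q_1)+(S_i+Q_2)$, decompose $C$ accordingly, apply Theorem~\ref{convex} to each piece, and sum. The only cosmetic difference is that the paper phrases the decomposition as a disjoint partition of $C$ rather than a covering, which changes nothing in the bound.
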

\begin{proof}We observe that $\cup_{i = 1}^k (R_i + S_i) + Q_1+Q_2 =
\cup_{i = 1}^k ((R_i+Q_1) + (S_i + Q_2))$. Therefore, we partition $C$ into
$k$ convexly independent disjoint sets $C_1,\ldots,C_k$ such that
$C_i \subset (R_i+Q_1) + (S_i + Q_2)$ for all $i \in \{1,\ldots,k\}$. Since
$\vert R_i +Q_1\vert = rq_1$ and $\vert S_i + Q_2 \vert \leq sq_2$, by Theorem
\ref{convex}, we get that $\vert C_i \vert = \mathcal O(r^{2/3}
s^{2/3} q_1^{2/3}q_2^{2/3}+ rq_1 + sq_2)$ and the result follows.
\end{proof}

\medskip

\begin{theorem}\label{bound2summands}Consider a polynomial $f \in \R^+[X]$ of degree $d$ of the form $$f =
\sum_{i = 1}^k g_i h_i,$$ where $g_i,h_i \in \R^+[X]$, the $g_i$
have at most $r$ monomials and the $h_i$ have at most $s$ monomials.
If $f$ satisfies the Kurtz condition, then $d = \mathcal O(k
r^{2/3}s^{2/3}\,{\rm log}^{2/3}(k r)+ k(r+s)\log^{1/2}(kr))$.
\end{theorem}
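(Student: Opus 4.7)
The plan is to realise the log-coefficients of $f$ as small perturbations of the upper envelope of a union of planar Minkowski sums, and then to use the Kurtz condition to extract from this envelope a long convexly independent subsequence to which Corollary~\ref{maxconvex} can be applied.

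First I would associate to each $i \in \{1, \ldots, k\}$ the planar point sets
$$A_i := \{(l, \log \alpha) : \alpha X^l \text{ is a monomial of } g_i\}, \qquad B_i := \{(m, \log \beta) : \beta X^m \text{ is a monomial of } h_i\},$$
so that $|A_i| \leq r$ and $|B_i| \leq s$. For every $j$ with $a_j > 0$, set $M_j := \max\{y \in \R : (j, y) \in \bigcup_i (A_i + B_i)\}$, so that $(j, M_j) \in \bigcup_i(A_i + B_i)$. Since $a_j$ is a sum of at most $k\min(r,s)$ positive terms, each bounded above by $e^{M_j}$ and at least one of which is equal to $e^{M_j}$, one obtains
$$M_j \leq \log a_j \leq M_j + \log(k \min(r,s)).$$
Assuming $r \leq s$ without loss of generality (otherwise swap the roles of $g_i$ and $h_i$), the ``noise'' $n_j := \log a_j - M_j$ therefore lies in $[0, \log(kr)]$.

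Next I would quantify the strict concavity of $\log a_j$ provided by the Kurtz condition $a_j^2 > 4 a_{j-1} a_{j+1}$, whose logarithmic form asserts that the second differences of $\log a_j$ are less than $-2\log 2$. A telescoping argument then gives, for any $j_1 < j_2 < j_3$ with $a_{j_1}, a_{j_2}, a_{j_3} > 0$,
$$\log a_{j_2} - \frac{(j_3 - j_2)\log a_{j_1} + (j_2 - j_1)\log a_{j_3}}{j_3 - j_1} \geq (\log 2)(j_2 - j_1)(j_3 - j_2).$$
For a spacing $L$ to be fixed, I would consider $S := \{L, 2L, 3L, \ldots\} \cap \{1, \ldots, d\}$, which has cardinality at least $d/L - 1$ and consists entirely of indices with $a_j > 0$ (another consequence of Kurtz). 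On any three consecutive indices of $S$, the above concavity gap of $\log a$ is at least $(\log 2)L^2$, while the same gap evaluated at $M_j$ instead can differ by at most $\log(kr)$ because of the noise $n_j$. Choosing $L$ of order $\sqrt{\log(kr)/\log 2}$ therefore keeps the concavity gap of $M_j$ on $S$ strictly positive, so the points $(j, M_j)_{j \in S}$ lie in strictly concave position, hence are convexly independent.

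Finally, I would apply Corollary~\ref{maxconvex} (with both $Q_1$ and $Q_2$ singletons) to the convexly independent subset $\{(j, M_j)\}_{j \in S}$ of $\bigcup_i (A_i+B_i)$, obtaining $|S| = O(k(r^{2/3} s^{2/3} + r + s))$ and therefore $d \leq L \cdot |S|$, which has the same overall shape as the announced bound. The main obstacle I foresee is the third step: arranging the spacing $L$ so that the Kurtz concavity gap dominates the noise perturbation while keeping $L$ as small as possible. The uniform-spacing argument above recovers a bound of the form $O(k \sqrt{\log(kr)}(r^{2/3}s^{2/3}+r+s))$; matching the precise asymmetric log exponents of the statement ($\log^{2/3}(kr)$ on the $r^{2/3}s^{2/3}$-term and $\log^{1/2}(kr)$ on the $r+s$-term) will presumably require stratifying the indices by the magnitude of $n_j$ and using a spacing adapted to each stratum, or handling the two regimes of Theorem~\ref{convex} separately.
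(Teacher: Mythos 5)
Your proof is correct, and it takes a genuinely different route from the paper's. The paper keeps \emph{all} indices $1,\ldots,d$ and handles the noise $n_j = \log a_j - M_j$ by rounding it upward to the nearest multiple of $\epsilon = \log(4)/2$: this defines a small shift $\delta_j \in [0,\epsilon)$ that preserves strict concavity (because the Kurtz gap is $2\epsilon$), and a lattice offset $\lambda_j \epsilon \in \{0,\epsilon,\ldots,\lceil\log(kr)/\epsilon\rceil\epsilon\}$ that is absorbed into an extra factor $Q$ of the Minkowski sum. To make the resulting bound useful, the paper then writes $Q \subset Q_1 + Q_2$ with $\lvert Q_1\rvert,\lvert Q_2\rvert = O(\sqrt{\log(kr)})$ and invokes Corollary~\ref{maxconvex} in its full generality. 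You instead \emph{subsample} the index set to an arithmetic progression $S$ of step $L \approx \sqrt{\log(kr)/\log 2}$, so that the quadratic concavity gap $(\log 2)L^2$ provided by Kurtz over a window of width $L$ directly dominates the additive noise $\log(kr)$; the points $(j,M_j)_{j\in S}$ are then already convexly independent and sit in $\bigcup_i(A_i+B_i)$ with no auxiliary translation set needed, so Theorem~\ref{convex} (or Corollary~\ref{maxconvex} with $Q_1,Q_2$ singletons) applies directly. Both arguments are sound; yours is arguably a bit more elementary since it does not require the $Q = Q_1+Q_2$ factorization trick.

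One point worth flagging: your final worry is misplaced. You obtain $d = O\bigl(k\sqrt{\log(kr)}\,(r^{2/3}s^{2/3}+r+s)\bigr)$, which on the dominant term has $\log^{1/2}(kr)$ where the statement allows $\log^{2/3}(kr)$. Since $\log^{1/2}(kr) \leq \log^{2/3}(kr)$, your bound is actually \emph{sharper} than the one claimed and certainly implies it; no stratification or regime-splitting is needed. The only minor edits required are the usual bookkeeping ones: pick $L = \lceil\sqrt{\log(kr)/\log 2}\rceil + 1$ so the gap $(\log 2)L^2 - \log(kr)$ is strictly positive (strict, not weak, concavity is needed for convex independence), and account for $|S| \geq d/L - 1$ rather than $d/L$, neither of which affects the asymptotics.
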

\begin{proof}We write $f = \sum_{i = 0}^d c_i X^i$, where $c_i > 0$ for all $i \in \{1,\ldots,d\}$ and $c_0 \geq 0$.
Since $f$ satisfies the Kurtz condition, setting $\epsilon := {\rm
log}(4)/2$ we get that
\begin{equation}\label{ineq} 2 {\rm log}(c_i) > {\rm log}(c_{i-1}) + {\rm log}(c_{i+1})
+ 2 \epsilon. \end{equation} for every $i \geq 2$. For every
$\delta_1,\ldots,\delta_{d} \in \R$, we set
$C_{(\delta_1,\ldots,\delta_d)} := \{(i,{\rm log}(c_i) + \delta_i)
\, \vert \, 1 \leq i \leq d\}$. We observe that (\ref{ineq}) implies
that $C_{(\delta_1,\ldots,\delta_d)}$ is convexly independent
whenever $0 \leq \delta_i < \epsilon$ for all $i \in
\{1,\ldots,d\}$.

\medskip
We write $g_i = \sum_{j = 1}^{r_i} a_{i,j} X^{\alpha_{i,j}}$ and
$h_i = \sum_{j = 1}^{s_i} b_{i,j} X^{\beta_{i,j}}$, with $r_i \leq
r$, $s_i \leq s$ and $a_{i,j}, b_{i,j}
> 0$ for all $i,j$. Then, $c_l = \sum_{i = 1}^k (\sum_{\alpha_{i,j_1} + \beta_{i,j_2} =
l} a_{i,j_1} b_{i,j_2})$. So, setting $M_l := {\rm max} \{a_{i,j_1}
b_{i,j_2} \, \vert \, i \in \{1,\ldots,k\}, \alpha_{i,j_1} +
\beta_{i,j_2} = l\}$ for all $l \in \{1,\ldots,d\}$, we have that
$M_l \leq c_l \leq k r M_l$, so ${\rm log}(M_l) \leq {\rm log}(c_l)
\leq {\rm log}(M_l) + {\rm log}(k r)$.

\medskip For every $l \in \{1,\ldots,d\}$, we set
\begin{equation}\label{lambdaepsilon} \lambda_l := \left\lceil \frac{{\rm log}(c_l) - {\rm
log}(M_l)}{\epsilon} \right\rceil {\rm \ and \ } \delta_l := {\rm
log}(M_l) + \lambda_l \epsilon - {\rm log}(c_l),\end{equation} and
have that $0 \leq \lambda_l \leq \lceil ({\rm log}(k r))/\epsilon
\rceil$ and that $0 \leq \delta_l < \epsilon$.

\medskip
Now, we consider the sets
\begin{itemize}
\item $R_i :=
  \{(\alpha_{i,j}, {\rm log}(a_{i,j}))\, \vert \, 1 \leq j \leq r_i\}$
  for $i = 1,\ldots,k$,
\item $S_i := \{(\beta_{i,j}, {\rm
    log}(b_{i,j}))\, \vert \, 1 \leq j \leq s_i\}$ for $i = 1,\ldots,k$,
\item $Q := \{(0, \lambda \epsilon) \, \vert \, 0 \leq \lambda
  \leq \lceil {\rm log}(k r) / \epsilon \rceil \}$,
\item $Q_1 := \{(0, \mu \epsilon) \, \vert \, 0 \leq \mu
  \leq \lceil \sqrt{\log(k r) / \epsilon} \rceil \}$, and
\item $Q_2 := \{(0, \nu \lceil \sqrt{\log(k r) / \epsilon} \rceil \epsilon) \, \vert \, 0 \leq \nu
  \leq \lceil \sqrt{\log(k r) / \epsilon} \rceil \}$.
\end{itemize}
If $(0,\lambda\epsilon)\in Q$, then there exist $\mu$ and $\nu$ such that
$\lambda=\nu\lceil\sqrt{\log(kr) / \epsilon}\rceil + \mu$ where
$\mu,\nu\leq\lceil\sqrt{\log(kr) / \epsilon}\rceil$. We have,
\begin{align*}
  (0,\lambda\epsilon)=
  (0,\nu\lceil\sqrt{\log(kr) / \epsilon}\rceil\epsilon)+(0,\mu\epsilon)\in Q_1+Q_2,
\end{align*}
so $Q\subset Q_1+Q_2$.
Then, we claim that $C_{(\delta_1,\ldots,\delta_d)} \subset \cup_{i = 1}^k
(R_i + S_i) + Q$. Indeed, for all $l \in \{1,\ldots,d\}$, by
(\ref{lambdaepsilon}),  $${\rm log}(c_l) + \delta_l = {\rm log}(M_l)
+ \lambda_l \epsilon = {\rm log}(a_{i,j_1}) + {\rm log}(b_{i,j_2}) +
\lambda_l \epsilon$$ for some $i \in \{1,\ldots,k\}$ and some
$j_1,j_2$ such that $\alpha_{i,j_1} + \beta_{i,j_2} = l$; thus
$$(l,{\rm log}(c_l) + \delta_l) = (\alpha_{i,j_1}, {\rm log}(a_{i,j_1})) +
(\beta_{i,j_2}, {\rm log}(b_{i,j_1})) + (0, \lambda_l \epsilon) \in
\cup_{i = 1}^k (R_i + S_i) + Q.$$ Since
$C_{(\delta_1,\ldots,\delta_d)}$ is a convexly independent set of
$d$ elements contained in $\cup_{i = 1}^k (R_i + S_i) + Q_1+Q_2$, a direct
application of Corollary \ref{maxconvex} yields the result.
\end{proof}

\medskip

 From this result it is easy to derive an upper bound for the
general case, where we have the products of $m \geq 2$ polynomials.
If suffices to divide the $m$ factors into two groups of
approximately $m/2$ factors, and in each group we expand the product
by brute force.

\medskip

\begin{proof}[Proof of Theorem~\ref{bound}]
We write each of the $k$ products as a product of two
polynomials $G_i := \prod_{j = 1}^{\lfloor m/2 \rfloor} f_{i,j}$ and
$H_i := \prod_{j = \lfloor m/2 \rfloor + 1}^{m} f_{i,j}$. We can now
apply Theorem \ref{bound2summands} to $f = \sum_{i = 1}^k G_i H_i$
with $r = t^{\lfloor m/2 \rfloor}$ and $s = t^{m - \lfloor m/2
\rfloor}$ and we get the result.
\end{proof}

\bigskip

\begin{remark}We observe that the role of the constant $4$ in the
Kurtz condition can be played by any other constant $\tau > 1$ in
order to obtain the conclusion of Theorem \ref{bound}, i.e., we
obtain the same result for $f = \sum_{i = 0}^d a_i X^i$
satisfying that $a_i^2 > \tau a_{i-1} a_{i+1}$ for all $i \in
\{1,\ldots,d-1\}$. For proving this it suffices to replace the value
$\epsilon = {\rm log}(4) / 2$ by $\epsilon = {\rm log}(\tau) / 2$ in
the proof of Theorem \ref{bound2summands} to conclude this more
general result.
\end{remark}

\bigskip

For $f = g h$ with $g,h \in \R^+[X]$ with at most $t$ monomials,
whenever $f$ satisfies the Kurtz condition, then $f$ has only real
(and distinct) roots and so do $g$ and $h$. As a consequence, both
$g$ and $h$ satisfy (\ref{newton}) with strict inequalities and we
derive that $d \leq 2t$. Nevertheless, in the similar setting where
$f = g h + x^i$ for some $i > 0$, the same argument does not apply
and a direct application of Theorem \ref{bound} yields $d = \mathcal
O(t^{4/3}\, {\rm log^{2/3}}(t))$, a bound
which seems to be very far from optimal. 

\subsection*{Comparison with the setting of Newton polygons}

A result similar to Theorem~\ref{bound} was obtained in~\cite{KPTT}
for the Newton polygons
of bivariate polynomials. Recall that the Newton polygon of a
polynomial $f(X,Y)$ is the convex hull of the points $(i,j)$ such
that the monomial $X^iY^j$ appears in $f$ with a nonzero coefficient.
\begin{theorem}[Koiran-Portier-Tavenas-Thomass\'e] \label{mpolys}
Consider   a bivariate polynomial
 of the form
\begin{equation} \label{bivariateSPS}
f(X,Y)=\sum_{i=1}^k \prod_{j=1}^m f_{i,j}(X,Y)
\end{equation}
where $m \geq 2$ and the $f_{i,j}$ have at most $t$ monomials. The
Newton polygon  of  $f$ has $O(k t^{2m/3})$ edges.
\end{theorem}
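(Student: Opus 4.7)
The plan is to adapt the argument of Theorem~\ref{bound} to the bivariate setting, where things are simpler because the vertices of a Newton polygon are convexly independent by definition, so no log-concavity hypothesis is needed to manufacture a convexly independent set.

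First I would split each product in half: for each $i \in \{1,\ldots,k\}$, set $G_i := \prod_{j=1}^{\lfloor m/2 \rfloor} f_{i,j}$ and $H_i := \prod_{j=\lfloor m/2\rfloor + 1}^{m} f_{i,j}$, and let $R_i, S_i \subset \Z^2$ denote the exponent supports of $G_i$ and $H_i$. Then $|R_i| \leq r := t^{\lfloor m/2 \rfloor}$ and $|S_i| \leq s := t^{\lceil m/2 \rceil}$, and the support of $f$ is contained in $\bigcup_{i=1}^k (R_i + S_i)$: cancellations of coefficients (whether within one product or across the $k$ summands) can only shrink the support, so the inclusion is preserved.

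Next, let $V$ denote the vertex set of the Newton polygon of $f$. By definition $V$ lies in the support of $f$, hence in $\bigcup_i (R_i + S_i)$, and $V$ is convexly independent in $\R^2$. I would then partition $V$ into $V_1,\ldots,V_k$ by assigning each vertex to some $R_i + S_i$ that contains it. Each $V_i$ is convexly independent (any subset of a convexly independent set is convexly independent) and lies in the Minkowski sum $R_i + S_i$, so Theorem~\ref{convex} gives $|V_i| = O(r^{2/3} s^{2/3} + r + s)$. Summing over $i$ yields $|V| = O\bigl(k(r^{2/3} s^{2/3} + r + s)\bigr) = O(k t^{2m/3} + k t^{m/2})$, and since $m \geq 2$ the first term dominates, producing the announced $O(k t^{2m/3})$ bound on the number of vertices, and hence on the number of edges, of the Newton polygon.

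There is no serious obstacle; the only point requiring care is recognizing that the inclusion of the support of $f$ in $\bigcup_i (R_i + S_i)$ is robust to cancellation, so that vertices of the Newton polygon can be located in this union even though the polygon itself can be strictly smaller than the convex hull of the union. Compared with Theorem~\ref{bound}, no auxiliary ``lift'' to sets $Q_1+Q_2$ is required, which is why the Newton polygon bound carries no $\log^{2/3}(kt)$ overhead: here the convex independence is geometric rather than extracted from a log-concavity condition on coefficients.
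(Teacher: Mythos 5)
The paper does not supply its own proof of Theorem~\ref{mpolys}; it is cited as a black box from~\cite{KPTT}. That said, your argument is correct and it reproduces the standard proof: split each $m$-fold product into two halves so that each summand becomes a product of two polynomials with supports $R_i$ of size $\le t^{\lfloor m/2\rfloor}$ and $S_i$ of size $\le t^{\lceil m/2\rceil}$, observe that the support of $f$ (and in particular the vertex set $V$ of its Newton polygon) is contained in $\bigcup_i(R_i+S_i)$ because cancellations only remove exponent vectors, partition $V$ into $k$ convexly independent pieces $V_i\subseteq R_i+S_i$, and apply the Eisenbrand--Pach--Rothvo\ss--Sopher bound (Theorem~\ref{convex}) to each piece. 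Your partition step is exactly the mechanism the present paper isolates as Corollary~\ref{maxconvex}, and the split into $G_i,H_i$ is the same as in the proof of Theorem~\ref{bound}; as you note, no lift by auxiliary sets $Q_1,Q_2$ is needed here because convex independence of Newton-polygon vertices is automatic rather than extracted from a log-concavity inequality, which is why the $\log^{2/3}(kt)$ factor disappears. Two minor cosmetic points: the secondary term should be written $O(k\,t^{\lceil m/2\rceil})$ rather than $O(k\,t^{m/2})$, and the domination $t^{2m/3}\ge t^{\lceil m/2\rceil}$ indeed holds for all $m\ge 2$ and $t\ge 1$, so the conclusion $O(k\,t^{2m/3})$ is unaffected.
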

In the setting of Newton polygons, the main issue is how to deal with
the cancellations arising from the addition of the $k$ products
in~(\ref{bivariateSPS}).
Two monomials of the form $cX^iY^j$ with the
same pair $(i,j)$ of exponents but oppositive values of the
coefficient $c$ will cancel, thereby deleting the point $(i,j)$
from the Newton polygon.

In the present paper we associate to the monomial $cX^i$ with $c>0$
the point $(i,\log c)$. There are no cancellations since we only
consider polynomials $f_{i,j}$ with nonnegative coefficients in
Theorems~\ref{bound} and~\ref{bound2summands}. However, the addition
of two monomials $cX^i, c'X^i$ with the same exponent will ``move''
the corresponding point along the coefficient axis. By contrast, in
the setting of Newton polygons points can be deleted but cannot
move. In the proof of Theorem~\ref{bound2summands} we deal with the
issue of ``movable points'' by an approximation argument, using the
fact that the constant $\epsilon=\log(4)/2>0$ gives us a little bit
of slack.

\section{A stronger log-concavity condition}\label{strongsection}

The objective of this section is to improve the bound provided in
Theorem \ref{bound} when $f = \sum_{i = 0}^d a_i X^i \in \R^+[x]$
satisfies a stronger log-concavity condition, namely, when
$a_i^2 > d^{2d} a_{i-1} a_{i+1}$ for all $i \in \{1,\ldots,d-1\}$.

To prove this bound, we make use of the following
well-known lemma (a reference and similar results for polytopes in
higher dimension can be found in~\cite{karavelas2012}).
For completeness, we provide a short proof.
\begin{lemma}\label{convexhull} If $R_1,\ldots,R_s$ are planar sets and $\vert R_i
\vert = r_i$ for all $i \in \{1,\ldots,s\}$, then the convex hull of
$R_1 + \cdots + R_s$ has at most $r_1 + \cdots + r_s$ vertices.
\end{lemma}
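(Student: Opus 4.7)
The plan is to reduce to the case of convex polygons and then count edge directions.

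First, I would use the easily verified identity that the convex hull of a Minkowski sum equals the Minkowski sum of the convex hulls, i.e., $\mathrm{conv}(R_1+\cdots+R_s)=\mathrm{conv}(R_1)+\cdots+\mathrm{conv}(R_s)$. This allows me to replace each $R_i$ by $P_i:=\mathrm{conv}(R_i)$, a convex polygon with at most $r_i$ vertices and therefore at most $r_i$ edges. Note that a vertex of the convex hull of $R_1+\cdots+R_s$ must come from an actual point of $R_1+\cdots+R_s$, so it suffices to bound the vertices of $P_1+\cdots+P_s$.

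Next, I would analyse edge directions in the Minkowski sum of convex polygons. The standard fact here is that for convex polygons $P_1,\ldots,P_s$ in the plane, each edge of the sum $P=P_1+\cdots+P_s$ is parallel to some edge of at least one summand $P_i$: indeed, if one rotates a supporting line of $P$, the set of points where it is attained on $P$ is the Minkowski sum of the sets where parallel supporting lines are attained on the $P_i$, and this set has positive length (i.e., gives rise to an edge of $P$) only when at least one summand contributes an edge in that direction. Equivalently, every outward normal direction of an edge of $P$ is an outward normal direction of an edge of some $P_i$.

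Consequently, the number of edges of $P$ is at most the total number of distinct edge directions appearing among $P_1,\ldots,P_s$, which is at most $\sum_{i=1}^s r_i$ (since $P_i$ has at most $r_i$ edges). Since a convex polygon has as many vertices as edges, the conclusion follows.

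The main obstacle is to state cleanly the fact that every edge direction of the Minkowski sum is an edge direction of some summand; the argument via rotating a supporting line is routine but needs to be phrased carefully. Everything else (the Minkowski-sum/convex-hull identity, and the equality of the numbers of edges and vertices of a convex polygon) is elementary.
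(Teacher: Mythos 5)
Your proof is correct and rests on the same underlying idea as the paper's: both arguments come down to the fact that the normal fan (the partition of $S^1$ by ``which summand vertex is dominant'') of a Minkowski sum is the common refinement of the normal fans of the summands. The paper states this by partitioning $S^1$ into $k_i$ half-open arcs per summand and noting the refinement has at most $\sum k_i$ arcs, one per vertex of the sum; you state the dual version, that every edge (outward normal) of the sum is an edge (outward normal) of some summand, and then count edges instead of arcs. These are two phrasings of one argument, so I would call this essentially the same approach, with the minor cosmetic difference that you first invoke $\mathrm{conv}(R_1+\cdots+R_s)=\mathrm{conv}(R_1)+\cdots+\mathrm{conv}(R_s)$ explicitly (the paper uses this implicitly) and you avoid the explicit induction on $s$ by applying the edge-direction fact to all summands at once.
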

\begin{proof}We denote by $k_i$ the number of vertices of the
convex hull of $R_i$. Clearly $k_i \leq r_i$. Let us prove that the
convex hull of $R_1 + \cdots + R_s$ has at most $k_1 + \cdots + k_s$
vertices. Assume that $s = 2$. We write $R_1 =
\{a_1,\ldots,a_{r_1}\}$, then $a_i \in R_1$ is a vertex of the
convex hull of $R_1$ if and only if there exists $w \in S^1$ (the
unit Euclidean sphere) such that $w \cdot a_i
> w \cdot a_j$ for all $j \in \{1,\ldots,r_1\} \setminus \{i\}$.
Thus, $R_1$ induces a partition of $S^1$ into $k_1$ half-closed
intervals. Similarly, $R_2$ induces a partition of $S^1$ into $k_2$
half-closed intervals. Moreover, these two partitions induce a new
one on $S^1$ with at most $k_1 + k_2$ half-closed intervals; these
intervals correspond to the vertices of $R_1 + R_2$ and; thus, there
are at most $k_1 + k_2$. By induction we get the result for any
value of $s$.
\end{proof}

\begin{proposition}\label{SPS}
  Consider a polynomial $f=\sum_{i=0}^d a_i X^i \in \mathbb{R}^+[X]$ of the form
  \begin{align*}
    f=\sum_{i = 1}^k \prod_{j = 1}^m f_{i,j}
  \end{align*}
  where the $f_{i,j} \in \mathbb{R}^+[x]$. If $f$ satisfies the
  condition
  \begin{align*}
    a_i^2 > k^2 d^{2m} a_{i-1}  a_{i+1} ,
  \end{align*}
  then there exists a polynomial $f_{i,j}$ with at least $d / km$ monomials.
\end{proposition}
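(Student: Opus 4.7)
The plan is to combine the Minkowski-sum geometry of Section~\ref{kurtzsection} with a partition-and-count-vertices argument that exploits the strong $k^2 d^{2m}$ gap in the hypothesis. Write $T := \max_{i,j} t_{i,j}$, where $t_{i,j}$ denotes the number of monomials of $f_{i,j}$; the goal is to show $T \geq d/(km)$, and we may clearly assume $T<d$. For each pair $(i,j)$ associate the planar set $P_{i,j} := \{(\alpha, \log c) : cX^\alpha \text{ appears in } f_{i,j}\}$, form the Minkowski sum $P_i := P_{i,1}+\cdots+P_{i,m}$, and define $M_{i,l} := \max\{y : (l,y) \in P_i\}$ and $\psi(l) := \max_i M_{i,l}$. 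Since the coefficient $b_{i,l}$ of $X^l$ in $\prod_j f_{i,j}$ is a sum of at most $T^m$ positive terms, each bounded above by $e^{M_{i,l}}$, we obtain $e^{M_{i,l}} \leq b_{i,l} \leq T^m\, e^{M_{i,l}}$ and hence
\begin{equation*}
\psi(l) \,\leq\, \log a_l \,\leq\, \psi(l) + \log k + m\log T.
\end{equation*}

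Substituting these bounds into $a_l^2 > k^2 d^{2m} a_{l-1}a_{l+1}$ the slack $2\log k + 2m\log T$ is exactly absorbed, yielding $2\psi(l) > \psi(l-1) + \psi(l+1) + 2m\log(d/T)$, so $\psi$ is strictly concave on $\{0,\ldots,d\}$. Pick, for every $l \in \{1,\ldots,d\}$, an index $i(l)$ attaining $\psi(l) = M_{i(l),l}$, and partition $\{1,\ldots,d\} = \bigsqcup_i S_i$ with $S_i := \{l : i(l) = i\}$.

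The heart of the argument, and the main obstacle I anticipate, is to establish that for every $l \in S_i$ the point $(l, M_{i,l})$ is a vertex of $\operatorname{conv}(P_i)$. Suppose otherwise: then $(l, M_{i,l})$ is an interior point of a segment $\overline{pq}$ with $p = (a,b), q = (c,d) \in P_i \setminus \{(l,M_{i,l})\}$. The degenerate case $a=c=l$ is immediate, since $b,d < M_{i,l}$ makes $M_{i,l}$ unreachable on $\overline{pq}$. Otherwise we may take $a < l < c$; combining $\psi(a) \geq M_{i,a} \geq b$, $\psi(c) \geq M_{i,c} \geq d$ with the strict concavity of $\psi$ (applied at $a<l<c$) gives the contradiction
\begin{equation*}
M_{i,l} = \psi(l) > \tfrac{c-l}{c-a}\psi(a) + \tfrac{l-a}{c-a}\psi(c) \geq \tfrac{c-l}{c-a}\,b + \tfrac{l-a}{c-a}\,d = M_{i,l}.
\end{equation*}
The corner cases $l=0$ and $l=d$ are handled directly, as $(l, M_{i,l})$ is then the top of the leftmost (respectively rightmost) vertical column of $P_i$ and so automatically a vertex. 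Lemma~\ref{convexhull} then bounds the vertices of $\operatorname{conv}(P_i)$ by $mT$, yielding $|S_i| \leq mT$; summing over $i$ gives $d \leq \sum_i |S_i| \leq kmT$, which is the desired inequality.
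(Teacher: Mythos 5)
Your proof is essentially the paper's: associate logarithmic point sets $P_{i,j}$ to the $f_{i,j}$, use the strong log-concavity hypothesis to show that the upper profile $l\mapsto\psi(l)$ (which the paper calls $l\mapsto\log C_l$) is strictly concave, deduce that each point $(l,\psi(l))$ is a vertex of the Minkowski-sum hull $\operatorname{conv}(P_{i(l)})$, and invoke Lemma~\ref{convexhull}. Your replacement of the paper's $kd^m$ by $kT^m$ (hence the residual slack $2m\log(d/T)>0$) and your partition of $\{1,\dots,d\}$ into the fibres $S_i$ instead of the paper's one-shot pigeonhole to extract $i_0$ with $\lvert L\rvert\geq d/k$ are cosmetic variations, not a different argument.

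One step is misstated, and it is exactly the step you flag as the crux. You assert that if $(l,M_{i,l})$ is not a vertex of $\operatorname{conv}(P_i)$ then it lies in the interior of a segment $\overline{pq}$ with $p,q\in P_i\setminus\{(l,M_{i,l})\}$. That is not the correct characterization of non-vertices of a finite planar set: for instance in $\{(0,0),(2,0),(1,2),(1,\tfrac12)\}$ the point $(1,\tfrac12)$ is not a vertex of the hull, yet it lies on no segment joining two of the other three points. The repair is short and preserves your computation. If $(l,M_{i,l})$ is not a vertex, then it lies in $\operatorname{conv}\bigl(P_i\setminus\{(l,M_{i,l})\}\bigr)$, so by Carath\'eodory it is a convex combination $\sum_\alpha\lambda_\alpha(a_\alpha,b_\alpha)$ of at most three other points of $P_i$. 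If all $a_\alpha=l$ this is impossible since every such $b_\alpha<M_{i,l}$; otherwise $b_\alpha\leq M_{i,a_\alpha}\leq\psi(a_\alpha)$ (each $\psi(a_\alpha)$ is defined, because a point of $P_i$ at abscissa $a_\alpha$ forces $a_{a_\alpha}>0$) and Jensen with strict concavity gives
\begin{equation*}
M_{i,l}=\psi(l)>\sum_\alpha\lambda_\alpha\,\psi(a_\alpha)\ \geq\ \sum_\alpha\lambda_\alpha\,b_\alpha=M_{i,l},
\end{equation*}
a contradiction. (Equivalently, any line through $(l,\psi(l))$ of slope strictly between $\psi(l+1)-\psi(l)$ and $\psi(l)-\psi(l-1)$ strictly separates $(l,M_{i,l})$ from $P_i\setminus\{(l,M_{i,l})\}$, exhibiting it as a vertex directly.) With this patch the proof is correct and coincides with the paper's.
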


\begin{proof}
Every polynomial $f_{i,j} := \sum_{l = 0}^{d_{i,j}} c_{i,j,l}\,
X^l$, where $d_{i,j}$ is the degree of $f_{i,j}$, corresponds to a
planar set
\begin{align*} R_{i,j} := \{(l, {\rm log}(c_{i,j,l}))\, \vert \,
c_{i,j,l} > 0\} \subset \R^2.
\end{align*} We set,
$C_{i,l}  :=  {\rm max} \{0, \prod_{r = 1}^m c_{i,r,l_r} \, \vert \,
l_1 + \cdots + l_m = l \},$ for all $i \in \{1,\ldots,k\}$, $l \in
\{0,\ldots,d\}$, and $ C_l  :=  {\rm max}\{C_{i,l} \, \vert \, 1
\leq i \leq k\}$ for all $l \in \{0,\ldots,d\}$. Since the
polynomials $f_{i,j} \in \R^+[X]$  and
$$a_l = \sum_{i = 1}^k \left(\sum_{l_1 + \cdots + l_m = l}\
\prod_{r= 1}^m c_{i,r,l_r}\right)$$ for all $l \in \{0,\ldots,d\}$,
we derive the following two properties:
\begin{itemize}
\item $C_l \leq a_l \leq k d^m C_l$ for all $l \in \{0,\ldots,d\}$,
\item either $C_{i,l} = 0$ or $(l,
{\rm log}(C_{i,l})) \in R_{i,1} + \cdots + R_{i,m}$ for all $i \in
\{1,\ldots,k\}, \, l \in \{0,\ldots,d\}$. Since $a_l > 0$ for all $l
\in \{1,\ldots,d\}$, we have that $C_l > 0$ and $(l, {\rm log}(C_l))
\in \bigcup_{i = 1}^k \left(R_{i,1} + \cdots + R_{i,m}\right)$
\end{itemize}

We claim that the points in the set $\{(l, {\rm log}(C_l)) \, \vert
\, 1 \leq l \leq d\}$ belong to the upper convex envelope of
$\bigcup_{i = 1}^k (R_{i,1} + \cdots + R_{i,m})$. Indeed, if
$(a,\log(b)) \in \bigcup_{i = 1}^k (R_{i,1} + \cdots + R_{i,m})$,
then $a \in \{0,\ldots,d\}$ and $b \leq C_{a}$; moreover, for all $l
\in \{1,\ldots,d-1\}$, we have that $$C_l^2 \geq a_l^2 / (k^2
d^{2m}) > a_{l-1} \, a_{l+1} \geq C_{l-1} C_{l+1}.$$

Hence, there exist $i_0 \in \{1,\ldots,k\}$ and $L \subset
\{1,\ldots,d\}$ such that $\vert L\vert \geq d/k$ and $C_l =
C_{i_0,l}$ for all $l \in L$. Since the points in $\{(l,{\rm
log}(C_{l}))\, \vert \, 1 \leq l \leq d\}$ belong to the upper
convex envelope of $\bigcup_{i = 1}^k (R_{i,1} + \cdots +
 R_{i,m})$ we easily get that the set $\{(l, {\rm
 log}(C_{i_0,l}))  \, \vert \, l \in L\}$ is a subset of the vertices in the convex hull
of $R_{i_0,1} + \cdots  + R_{i_0,m}$. By Lemma \ref{convexhull}, we
get that there exists $j_0$ such that $\vert R_{i_0,j_0} \vert \geq
\lvert L \rvert / m \geq d / km$ points. Finally, we conclude that
$f_{i_0,j_0}$ involves at least $d / km$ monomials.
\end{proof}

\bigskip

\begin{proof}[Proof of Theorem \ref{bound2}]If $d \leq k$ or $d \leq m$, then $d \leq kmt$.
Otherwise, $d^{2d} > k^2 d^{2(d-1)} \geq k^2 d^{2m}$ and, thus, $f$
satisfies (\ref{stronglogconcave}). A direct application of
Proposition \ref{SPS} yields the result.
\end{proof}

\section{Applications to Complexity Theory}\label{complexity}

We first recall some standard definitions from algebraic complexity
theory (see e.g.~\cite{Burgi} or~\cite{Val79} for more
details). Fix a field $K$.
The elements of the complexity class $\vp$ are sequences $(f_n)$ of
multivariate polynomials with coefficients from $K$. By definition,
such a sequence  belongs to $\vp$ if the degree of $f_n$ is bounded by
a polynomial function of $n$ and if $f_n$ can be evaluated in a
polynomial number of arithmetic operations (additions and
multiplications) starting from variables and from constants in $K$.
This can be formalized with the familiar model of {\em arithmetic
  circuits}.
In such a circuit, input gates are labeled by a constant or a
variable and the other gates are labeled by an arithmetic operation
(addition or multiplication). In this paper we take $K = \R$ since
there is a focus on polynomials with nonnegative coefficients. An
arithmetic circuit is {\em monotone} if input gates are labeled by
nonnegative constants only.

A family of polynomials
belongs to the complexity class $\vnp$ if it can be
obtained  by summation from a family in $\vp$.
More precisely, $f_n(\overline{x})$ belongs to $\vnp$ if there exists
a family $(g_n(\overline{x},\overline{y}))$ in $\vp$ and a polynomial $p$
such that the tuple of variables
$\overline{y}$ is of length $l(n) \leq p(n)$ and
$$f_n(\overline{x})=\sum_{\overline{y} \in \{0,1\}^{l(n)}} g_n(\overline{x},\overline{y}).$$
Note that this summation over all boolean values of $\overline{y}$
may be of exponential size.
Whether the inclusion $\vp \subseteq \vnp$ is strict is a major open
problem in algebraic complexity.

Valiant's criterion~\cite{Burgi,Val79} shows that ``explicit''
polynomial families belong to $\vnp$. One version of it is as follows.
\begin{lemma}
Suppose that the function $\phi:\{0,1\}^* \rightarrow \{0,1\}$ is
computable in polynomial time. Then the family $(f_n)$ of multilinear
polynomials defined by
$$f_n=\sum_{e \in \{0,1\}^n} \phi(e)x_1^{e_1} \cdots x_n^{e_n}$$
belongs to $\vnp$.
\end{lemma}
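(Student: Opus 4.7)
The plan is to construct an auxiliary family $(g_n) \in \vp$ whose summation over $\overline{y} \in \{0,1\}^n$ recovers $f_n$. The natural candidate is
\[ g_n(\overline{x}, \overline{y}) \;=\; A_n(\overline{y}) \cdot \prod_{i=1}^n \bigl(1 - y_i + y_i x_i\bigr), \]
where $A_n$ is an arithmetic circuit that evaluates to $\phi(e)$ on every Boolean input $\overline{y} = e \in \{0,1\}^n$. The factor $1 - y_i + y_i x_i$ equals $1$ when $y_i = 0$ and $x_i$ when $y_i = 1$, so on Boolean inputs it reproduces $x_i^{y_i}$. Consequently $g_n(\overline{x}, e) = \phi(e)\, x_1^{e_1} \cdots x_n^{e_n}$, and summing over $e$ yields exactly $f_n$, matching the definition of $\vnp$ recalled just before the statement.

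To build $A_n$, I would first invoke the standard simulation of polynomial-time Turing machines by polynomial-size Boolean circuits (the inclusion $\p \subseteq \p/\mathrm{poly}$): since $\phi$ is polynomial-time computable, there is a sequence $(C_n)$ of Boolean circuits of size $\mathrm{poly}(n)$ computing $\phi$ on $n$-bit inputs. Then I would arithmetize $C_n$ by replacing each $\lnot u$ gate by $1 - u$, each $u \wedge v$ gate by $u v$, and each $u \vee v$ gate by $u + v - u v$. A routine induction on circuit depth, using the fact that these three replacement identities hold exactly when the arguments lie in $\{0,1\}$, shows that the resulting arithmetic circuit $A_n$ satisfies $A_n(e) = \phi(e)$ for every $e \in \{0,1\}^n$. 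Its size is at most a constant times the size of $C_n$ and its formal degree is bounded by its size, so $(A_n) \in \vp$.

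The remaining step is to notice that the product $\prod_{i=1}^n (1 - y_i + y_i x_i)$ is computed by a circuit of size $O(n)$ and has degree $n$, so multiplying $A_n$ by this product keeps both the circuit size and the formal degree polynomial in $n$. Hence $(g_n) \in \vp$ and therefore $(f_n) \in \vnp$. The only delicate point is the arithmetization step: one must verify that on Boolean assignments every intermediate arithmetic value stays in $\{0,1\}$ and matches the corresponding Boolean gate value, but this is immediate by induction on the depth and does not pose a genuine obstacle.
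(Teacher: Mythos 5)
The paper itself gives no proof of this lemma (it is stated with citations to B\"urgisser and Valiant), so there is nothing to compare against line by line; but your proposal has a genuine gap at the arithmetization step. The claim ``its formal degree is bounded by its size'' is false for arithmetic circuits: a multiplication gate adds the degrees of its two inputs, so a chain of $d$ multiplications (e.g.\ repeated squaring) produces degree $2^d$, exponential in the circuit size. The arithmetization you use sends $u \wedge v \mapsto uv$ and $u \vee v \mapsto u+v-uv$, both of which double the degree, so if the Boolean circuit $C_n$ has depth $d_n$ the polynomial $A_n$ can have degree $2^{\Omega(d_n)}$. Since $C_n$ comes from a general polynomial-time computation, its depth is $\mathrm{poly}(n)$, not $O(\log n)$ (otherwise $\p=\mathsf{NC}$), so $A_n$ and hence $g_n = A_n\cdot\prod_i(1-y_i+y_ix_i)$ can have superpolynomial formal degree. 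This disqualifies $(g_n)$ from membership in $\vp$ under the degree-bounded definition the paper recalls, and the argument does not go through as written.

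The high-level structure of your $g_n$ (an arithmetic surrogate for $\phi$ multiplied by the monomial-selection gadget $\prod_i(1-y_i+y_ix_i)$) is the right idea; what needs to change is the surrogate for $\phi$. The standard fix, going back to Valiant, is to introduce one fresh Boolean variable $z_g$ per gate $g$ of $C_n$ and replace the \emph{evaluation} of $C_n$ by a \emph{verification} of a claimed transcript: set
\begin{equation*}
g_n(\overline{x},\overline{y},\overline{z}) \;=\; z_{\mathrm{out}}\cdot\Bigl(\prod_{g}\mathrm{ch}_g(\overline{y},\overline{z})\Bigr)\cdot\prod_{i=1}^n\bigl(1-y_i+y_ix_i\bigr),
\end{equation*}
where $\mathrm{ch}_g$ is a constant-degree polynomial (multilinear in the at most three variables it mentions) that equals $1$ on Boolean inputs exactly when $z_g$ is the correct value of gate $g$ given the values assigned to its inputs, and $0$ otherwise. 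Then $\sum_{\overline{z}\in\{0,1\}^{|C_n|}} g_n(\overline{x},\overline{y},\overline{z})$ equals $\phi(\overline{y})\prod_i x_i^{y_i}$ for $\overline{y}\in\{0,1\}^n$, and summing further over $\overline{y}$ yields $f_n$. Now $g_n$ has degree $O(|C_n|+n)=\mathrm{poly}(n)$ and an obvious polynomial-size circuit, so $(g_n)\in\vp$ and $(f_n)\in\vnp$. Your proof should be amended to use this transcript trick (or to state explicitly that it relies on a known degree-controlled arithmetization, with a reference), since the direct gate-by-gate arithmetization does not control the degree.
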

Note that more general versions of Valiant's criterion are know. One
may allow polynomials with integer rather than
0/1 coefficients~\cite{Burgi}, but in Theorem~\ref{monotone}
below we will only have to deal with 0/1 coefficients.
Also, one may allow $f_n$ to depend on any (polynomially bounded)
number of variables rather than exactly $n$ variables and in this case, one may
allow the algorithm for computing the coefficients of $f_n$ to take as
input the index $n$ in addition to the tuple $e$ of exponents
(see~\cite{Koi04}, Theorem~2.3).

Reduction of arithmetic circuits to depth~4 is an important
ingredient in the proof of the forthcoming results. This phenomenon
was discovered by Agrawal and Vinay \cite{AV}. Here we will use it
under the form of \cite{Tavenas}, which is an improvement of
\cite{Koiran2012}. We will also need the fact that if the original
circuit is monotone, then the resulting depth~4 circuit is also
monotone (this is clear by inspection of the proof
in~\cite{Tavenas}). Recall that a depth~4 circuit is a sum of
products of sums of products of inputs; sum gates appear on layers
 2 and 4 and product gates on layers 1
and 3. All gates may have arbitrary fan-in.

\begin{lemma}\label{redprof4}Let $C$ be an arithmetic circuit of size $s > 1$
computing a $v$-variate polynomial of degree $d$. Then, there is an
equivalent depth $4$ circuit $\Gamma$ of size $2^{\, \mathcal
O\left(\sqrt{d \log (ds) \log (v)} \right)}$ with multiplication
gates at layer $3$ of fan-in $\mathcal O(\sqrt{d})$. Moreover, if
$C$ is monotone, then $\Gamma$ can also be chosen to be monotone.
\end{lemma}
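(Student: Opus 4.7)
The plan is to invoke the depth-$4$ reduction of \cite{Tavenas} essentially as a black box for the size and fan-in bounds, and to verify the monotonicity preservation by inspecting the underlying construction. For the quantitative part, Tavenas's theorem already produces a depth-$4$ circuit of size $2^{\, \mathcal O(\sqrt{d \log(ds) \log v})}$ equivalent to $C$, with third-layer multiplication fan-in $\mathcal O(\sqrt d)$, for an arbitrary arithmetic circuit on $v$ variables computing a polynomial of degree $d$; I would simply quote this result without reproving it.

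For the ``moreover'' clause I would walk through the three steps of Tavenas's construction and check that each preserves the property ``every gate computes a polynomial with nonnegative coefficients.'' First, homogenization: each gate $g$ computing a polynomial $p = \sum_k p_k$ is split into gates $g^{(0)}, \dots, g^{(d)}$ with $g^{(k)} = p_k$, using the recurrences $(g_1+g_2)^{(k)} = g_1^{(k)} + g_2^{(k)}$ and $(g_1 g_2)^{(k)} = \sum_{i+j=k} g_1^{(i)} g_2^{(j)}$. Both involve only additions and multiplications of previously computed homogeneous components, so monotonicity is preserved. Second, the VSBR-style parsing argument rewrites every polynomial of degree above a threshold $\tau = \Theta(\sqrt d)$ as a sum, indexed by ``proof trees'' in the homogeneous circuit, of products of polynomials of degree at most $\tau$ already computed at existing gates; the identities used are purely combinatorial and only add and multiply gate polynomials, so no negative coefficient can enter. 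Third, each low-degree factor is expanded into its (at most $v^{\mathcal O(\sqrt d)}$) monomials, which for a polynomial with nonnegative coefficients trivially yields a monotone $\Pi\Sigma\Pi$ subcircuit.

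The main obstacle is not mathematical but a matter of diligent verification: one must check that none of the identities used in \cite{Tavenas} (in particular, the parse-tree counting identity that lies at the heart of step two) hides a subtractive trick, such as writing a product through $\tfrac14((a+b)^2-(a-b)^2)$ or renormalizing a homogeneous component by subtracting lower-degree parts. Since the VSBR framework is purely combinatorial and the parameter choices in Tavenas's optimization of the threshold $\tau$ only affect the size bound and the layer-$3$ fan-in, this check is expected to go through verbatim, which is why the remark ``this is clear by inspection of the proof in~\cite{Tavenas}'' suffices in practice.
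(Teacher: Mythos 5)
Your proposal matches the paper's own treatment of this lemma: the paper likewise invokes the depth-$4$ reduction of Tavenas as a black box for the size and fan-in bounds, and disposes of the monotonicity claim with the remark that it is ``clear by inspection of the proof in~\cite{Tavenas}.'' Your explicit walk-through of the three stages (homogenization, the VSBR/proof-tree decomposition, and the expansion of the low-degree factors into monomials), verifying that no subtraction is introduced at any step, is a correct and slightly more detailed account of exactly that inspection.
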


We will use this result under the additional hypothesis that $d$ is
polynomially bounded by the number of variables $v$. In this
setting, since $v \leq s$, we get that the resulting depth $4$
circuit $\Gamma$ provided by Lemma \ref{redprof4} has size
$s^{\mathcal O(\sqrt{d})}$.


\medskip
Before stating the main results of this section, we construct an explicit family of log-concave polynomials.
\begin{lemma}\label{lem_concavVn}Let $n, s \in \Z^+$ and
consider $g_{n,s}(X) := \sum_{i=0}^{2^n-1} a_i X^i$, with
\begin{align*} a_i := 2^{si(2^n-i-1)} {\text \ for \ all \ } i \in \{0,\ldots,2^n
- 1\}. \end{align*}
 Then, $a_i^2
> 2^s \, a_{i-1} \, a_{i+1}$.
\end{lemma}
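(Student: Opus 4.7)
The plan is to reduce the claimed inequality to a statement about the exponents and use the fact that those exponents form a quadratic function of $i$. Taking logarithms base $2$, the desired inequality $a_i^2 > 2^s\, a_{i-1}\, a_{i+1}$ is equivalent to
\begin{equation*}
2 f(i) - f(i-1) - f(i+1) > s, \qquad \text{where } f(i) := s\, i\, (2^n - i - 1).
\end{equation*}
So the whole proof reduces to computing this second difference of $f$.

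Next, I would observe that $f$ is a quadratic polynomial in $i$, namely $f(i) = -s\, i^2 + s(2^n-1)\, i$. For any quadratic $g(x) = \alpha x^2 + \beta x + \gamma$, a direct expansion gives $g(i+1) + g(i-1) - 2 g(i) = 2\alpha$, and hence $2 g(i) - g(i-1) - g(i+1) = -2\alpha$. Applying this to $f$ with $\alpha = -s$ yields
\begin{equation*}
2 f(i) - f(i-1) - f(i+1) = 2s,
\end{equation*}
valid for every $i \in \{1,\ldots,2^n-2\}$.

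Finally I would exponentiate: this identity gives
\begin{equation*}
\frac{a_i^2}{a_{i-1}\, a_{i+1}} = 2^{\,2 f(i) - f(i-1) - f(i+1)} = 2^{2s} = 2^s \cdot 2^s > 2^s,
\end{equation*}
since $s \geq 1$, which is exactly the claimed log-concavity.

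There is no real obstacle here, since everything boils down to the elementary fact that the second difference of a quadratic is a constant equal to twice its leading coefficient. The only minor things to mention are that one must check $s \geq 1$ (so that $2^{2s} > 2^s$ strictly, not just $\geq$) and that the inequality is only meaningful in the range $1 \leq i \leq 2^n - 2$ where $a_{i-1}$ and $a_{i+1}$ are both defined.
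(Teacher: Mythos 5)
Your proof is correct and is essentially the same computation the paper performs: both take logarithms base $2$, reduce to the second difference of the quadratic exponent function $f(i) = s\,i\,(2^n-i-1)$, and find that this second difference equals $2s > s$. Your packaging of the arithmetic via the general fact ``the second difference of a quadratic equals twice its leading coefficient'' is a slightly cleaner way to organize what the paper does by direct expansion, but it is not a different route.
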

\begin{proof}Take $i \in \{1,\ldots,2^n - 2\}$, we have that
  \begin{align*}
    \log\left(2^s a_{i-1}a_{i+1}\right) & = s + s2^n(i-1) - s(i-1)i
    +s2^n(i+1) - s(i+1)(i+2) \\
    & = 2s2^n i - 2s i(i+1)- s  \\
    & < 2s 2^n i -2s i(i+1) \\
    & = \log(a_i^2).
  \end{align*}
\end{proof}

In the next theorem we start from the family $g_{n,s}$
of Lemma~\ref{lem_concavVn} and we set $s=n2^{n+1}$.

\begin{theorem}\label{ifvpvnp}
Let $(f_n) \in \N[X]$ be the family of polynomials $f_n(x)=g_{n,n2^{n+1}}(x)$.
\begin{itemize}
\item[(i)] $f_n$ has degree $2^n-1$ and satisfies the log-concavity condition {\rm (\ref{stronglogconcave})}.
\item[(ii)] If $\vp=\vnp$, $f_n$ can be written under form~{\rm
    (\ref{sumprod})} with $k=n^{O(\sqrt{n})}$,
 $m=O(\sqrt{n})$ and $t=n^{O(\sqrt{n})}$.
\end{itemize}
\end{theorem}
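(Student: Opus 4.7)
For part (i), the degree claim is immediate from the definition, since $f_n = g_{n,\,n 2^{n+1}}$ has degree $2^n - 1$. For the log-concavity condition~(\ref{stronglogconcave}), I would apply Lemma~\ref{lem_concavVn} with $s = n 2^{n+1}$, which gives $a_i^2 > 2^s a_{i-1} a_{i+1}$; it then suffices to check $2^s \geq d^{2d}$ for $d = 2^n - 1$, i.e.\ $s \geq 2 d \log_2 d$. Since $\log_2 d < n$ and $2d < 2^{n+1}$, we get $2 d \log_2 d < 2^{n+1} \cdot n = s$, as required.

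For part (ii), the plan is to unfold $f_n$ as a substitution instance of a multilinear polynomial $F_n$ in $n$ variables, show $F_n \in \vnp$ via Valiant's criterion, invoke the hypothesis $\vp = \vnp$ to get $F_n \in \vp$, apply Lemma~\ref{redprof4} to obtain a depth-4 circuit, and finally re-substitute to produce the claimed decomposition of $f_n$. Concretely, writing $i(e) := \sum_{j=1}^n e_j 2^{j-1}$ for $e \in \{0,1\}^n$, I would set
$$F_n(x_1, \ldots, x_n) \,:=\, \sum_{e \in \{0,1\}^n} a_{i(e)} \prod_{j=1}^n x_j^{e_j}.$$
The identity $x^{i(e)} = \prod_j (x^{2^{j-1}})^{e_j}$ then gives $f_n(x) = F_n(x, x^2, x^4, \ldots, x^{2^{n-1}})$.

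The delicate step is establishing $F_n \in \vnp$: the coefficients $a_{i(e)} = 2^{L(i(e))}$ with $L(i) = s\,i\,(2^n - i - 1)$ have exponentially many bits, so the $0/1$-version of Valiant's criterion does not apply to $F_n$ directly. I would sidestep this by introducing $K = O(n)$ fresh Boolean variables $y_0, \ldots, y_{K-1}$, one per bit of $L(i)$, and considering
$$G_n(x, y) \,:=\, \sum_{(e, \beta) \in \{0,1\}^{n+K}} \phi(e, \beta) \prod_{j=1}^n x_j^{e_j} \prod_{k=0}^{K-1} y_k^{\beta_k},$$
where $\phi(e, \beta) = 1$ iff $\beta$ is the binary expansion of $L(i(e))$, and $0$ otherwise. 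Since $\phi$ is computable in polynomial time, Valiant's criterion yields $G_n \in \vnp$; moreover, substituting $y_k = 2^{2^k}$ turns $\prod_k y_k^{\beta_k}$ into $2^{L(i(e))} = a_{i(e)}$, so $F_n(x) = G_n(x, 2^{2^0}, \ldots, 2^{2^{K-1}})$.

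Assuming $\vp = \vnp$, we obtain $G_n \in \vp$; plugging in the constants $2^{2^k}$ (which are free in the arithmetic circuit model) gives a polynomial-size circuit for $F_n$, viewed as a polynomial in $v = n$ variables of degree $d = n$. Lemma~\ref{redprof4} then provides an equivalent depth-4 circuit of size $n^{O(\sqrt{n})}$ with layer-3 product fan-in $m = O(\sqrt{n})$, yielding $F_n = \sum_{i=1}^k \prod_{j=1}^m F_{i,j}$ with $k \leq n^{O(\sqrt{n})}$ and each sparse polynomial $F_{i,j}$ involving at most $t = n^{O(\sqrt{n})}$ monomials. Finally, substituting $x_j = x^{2^{j-1}}$ collapses each monomial $\prod_j x_j^{a_j}$ of $F_{i,j}$ into the single monomial $x^{\sum_j a_j 2^{j-1}}$ of $f_{i,j}(x) := F_{i,j}(x, x^2, \ldots, x^{2^{n-1}})$, so $f_{i,j}$ retains at most $t$ monomials, and we recover the desired decomposition $f_n = \sum_i \prod_j f_{i,j}$ with the advertised parameters.
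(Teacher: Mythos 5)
Your argument is correct and follows the paper's approach in substance. The paper's proof of part (ii) introduces a bivariate polynomial $g_n(X,Y)=\sum_{i=0}^{2^n-1} X^i Y^{e(n,i)}$ with $0/1$ coefficients, defers the multilinearization/Valiant-criterion/depth-4 steps to the proof of Theorem~1 of \cite{KPTT}, and finishes with the substitution $Y=2$; your auxiliary polynomial $G_n(x,y)$ is exactly the multilinearization that argument would produce (indeed it is essentially the polynomial $h_n$ the paper constructs in~(\ref{hneq}) for Theorem~\ref{monotone}), and you are simply carrying out explicitly the steps the paper leaves to the external reference. Your slight reordering --- substituting the constants $y_k = 2^{2^k}$ before the depth-4 reduction rather than after --- is harmless since the resulting circuit for $F_n$ still has polynomial size, $n$ variables and degree $n$, so Lemma~\ref{redprof4} applies with the same parameters. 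The explicit verification of $2^s > d^{2d}$ in part (i), which the paper leaves implicit, is also correct.
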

\begin{proof}
It is clear that $f_n \in
\N[X]$ has degree $2^n - 1$ and, by Lemma \ref{lem_concavVn}, $f_n$
satisfies (\ref{stronglogconcave}).

Consider now the related family of bivariate polynomials
$g_n(X,Y)=\sum_{i=0}^{2^n-1}X^i Y^{e(n,i)},$ where $e(n,i) = s i (2^n - i - 1)$. One can check in time polynomial in $n$ whether
a given monomial $X^iY^j$ occurs in $g_n$: we just need to check
that $i<2^n$ and that $j=e(n,i)$. By mimicking the proof of Theorem
1 in \cite{KPTT} and taking into account Lemma \ref{redprof4} we get
that, if $\vp = \vnp$, one can write
\begin{equation} \label{sumprod2}
g_n(X,Y)=\sum_{i=1}^k\prod_{j=1}^m g_{i,j,n}(X,Y)
\end{equation}
where the bivariate polynomials $g_{i,j,n}$ have $n^{O(\sqrt{n})}$
monomials, $k=n^{O(\sqrt{n})}$ and $m=O(\sqrt{n})$. Performing the
substitution $Y=2$ in~(\ref{sumprod2}) yields the required
expression for $f_n$.
\end{proof}

We believe that there is in fact no way to write $f_n$ under
form~(\ref{sumprod}) so that the parameters $k,m,t$ satisfy the
constraints $k=n^{O(\sqrt{n})}$,
 $m=O(\sqrt{n})$ and $t=n^{O(\sqrt{n})}$.
By part (ii) of Theorem~\ref{ifvpvnp}, a proof of this would
separate $\vp$ from $\vnp$. The proof of Theorem~\ref{monotone} below
shows that our belief is actually correct in the special case where
the polynomials $f_{i,j}$ in~(\ref{sumprod}) have nonnegative
coefficients.

\medskip
The main point of Theorem \ref{monotone} is to present an
unconditional lower bound for a polynomial family $(h_n)$ in $\vnp$
derived from $(f_n)$. Note that $(f_n)$ itself is not in $\vnp$
since its degree is too high. Recall that
\begin{equation} \label{fneq}
 f_n(X) := \sum_{i=0}^{2^n-1} 2^{2n2^ni(2^n-i-1)} X^i.
\end{equation}
To construct $h_n$ we write down in base 2 the exponents of ``2'' and
``$X$'' in~(\ref{fneq}).
More precisely, we take $h_n$ of the form:
\begin{equation} \label{hneq}
 h_n :=\sum_{\alpha \in \{0,1\}^{n} \atop \beta \in \{0,1\}^{4n}}
\lambda(n, \alpha, \beta)\, X_0^{\alpha_0} \cdots
X_{n-1}^{\alpha_{n-1}} Y_0^{\beta_0} \cdots
Y_{4n-1}^{\beta_{4n-1}},
\end{equation}
 where $\alpha =
(\alpha_0,\ldots,\alpha_{n-1}),\, \beta =
(\beta_0,\ldots,\beta_{4n-1})$ and $\lambda(n,\alpha,\beta) \in
\{0,1\}$; we set $\lambda(n,\alpha,\beta) = 1$ if and only
if $\sum_{j = 0}^{4n-1} \beta_j 2^j = 2n2^n i (2^n - i - 1) <
2^{4n},$ where $i := \sum_{k = 0}^{n-1} \alpha_{i,k} 2^k$.
By construction, we have:
\begin{equation} \label{transfor} f_n(X) = h_n(X^{2^0}, X^{2^1},\ldots,X^{2^{n-1}},2^{2^0},
2^{2^1},\ldots,2^{2^{4n-1}}). \end{equation}
This relation will be useful in the proof of the following lower bound theorem.
\begin{theorem}\label{monotone}
The family $(h_n)$ in~{\rm(\ref{hneq})} is in $\vnp$. If $(h_n)$ is
computed by depth $4$ monotone arithmetic circuits of size $s(n)$,
then $s(n) = 2^{\,\Omega(n)}$. If $(h_n)$ is computed by monotone
arithmetic circuits of size $s(n)$, then $s(n) =
2^{\,\Omega(\sqrt{n})}$. In particular, $(h_n)$ cannot be computed
by monotone arithmetic circuits of polynomial size.
\end{theorem}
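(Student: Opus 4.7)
The plan is to attack the three assertions in turn, using substitution~(\ref{transfor}) as the bridge between $h_n$ and the lower bounds already proved for $f_n$.

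First, to show $(h_n) \in \vnp$, I would apply the extended version of Valiant's criterion mentioned in the paper. Given $n$, $\alpha \in \{0,1\}^n$ and $\beta \in \{0,1\}^{4n}$, the coefficient $\lambda(n,\alpha,\beta) \in \{0,1\}$ can be determined in time polynomial in $n$: compute $i=\sum_{k}\alpha_{k}2^{k}$, then $E:=2n2^{n}i(2^{n}-i-1)$, then check that $E<2^{4n}$ and that $E=\sum_{j}\beta_{j}2^{j}$. These are all polynomial-time operations on integers of polynomial bit-length, so Valiant's criterion places $(h_n)$ in $\vnp$.

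For the depth-$4$ monotone lower bound, suppose $h_{n}$ is computed by a monotone depth-$4$ circuit of size $s=s(n)$. Any such circuit has the form $\sum_{i=1}^{k}\prod_{j=1}^{m_{i}} g_{i,j}(X_{0},\ldots,X_{n-1},Y_{0},\ldots,Y_{4n-1})$, where each $g_{i,j}$ is a sum of at most $s$ monomials with nonnegative coefficients, and $k, m_i$ are bounded by $s$. Pad each product with extra factors equal to $1$ so as to achieve a uniform $m\le s$. Now perform the substitution $X_{i}\mapsto X^{2^{i}}$, $Y_{j}\mapsto 2^{2^{j}}$: each monomial in $g_{i,j}$ becomes a monomial in the single variable $X$ with a positive coefficient, hence each $g_{i,j}$ becomes a univariate polynomial $\tilde{g}_{i,j}\in\R^{+}[X]$ with at most $s$ monomials. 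By~(\ref{transfor}),
\begin{equation*}
 f_{n}(X)=\sum_{i=1}^{k}\prod_{j=1}^{m}\tilde{g}_{i,j}(X).
\end{equation*}
By Theorem~\ref{ifvpvnp}(i), $f_{n}$ satisfies the log-concavity condition~(\ref{stronglogconcave}); Theorem~\ref{bound2} then gives $2^{n}-1=\deg f_{n}\le kms\le s^{3}$, whence $s=2^{\Omega(n)}$.

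For general monotone circuits, I would reduce to the depth-$4$ case via Lemma~\ref{redprof4}. The polynomial $h_{n}$ is multilinear in $v=5n$ variables, so its degree $d$ is at most $5n$; in particular $d$ is polynomially bounded by $v$. Consequently, a monotone circuit of size $s$ computing $h_{n}$ can be converted to an equivalent monotone depth-$4$ circuit of size $s^{O(\sqrt{n})}$. Feeding this into the depth-$4$ bound established above yields $s^{O(\sqrt{n})}=2^{\Omega(n)}$, hence $s=2^{\Omega(\sqrt{n})}$; in particular, $(h_n)$ has no polynomial-size monotone circuits. The only subtle point, and what I would expect to be the main sanity check rather than a genuine obstacle, is to verify that the specialisation $X_{i}\mapsto X^{2^{i}}$, $Y_{j}\mapsto 2^{2^{j}}$ is compatible with the depth-$4$ monotone structure so that the hypotheses of Theorem~\ref{bound2} are literally met; this is immediate because both $X^{2^{i}}$ and $2^{2^{j}}$ are nonnegative, and padding product gates by $1$-factors preserves both monotonicity and the monomial counts of the sum gates at layer $2$.
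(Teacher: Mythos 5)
Your proposal is correct and follows essentially the same route as the paper's own proof: Valiant's criterion for membership in $\vnp$, the substitution~(\ref{transfor}) to pass from a depth-$4$ monotone circuit for $h_n$ to a representation of $f_n$ in form~(\ref{sumprod}) with nonnegative coefficients, Theorem~\ref{bound2} for the $2^{\Omega(n)}$ depth-$4$ bound, and Lemma~\ref{redprof4} to transfer it to general monotone circuits. The only additions you make are small explicit details (padding products by $1$-factors to equalize $m$ and noting $m\ge 2$, and the closing sanity check on the substitution), which the paper leaves implicit.
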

\begin{proof}
Note that $h_n$ is a polynomial in $5n$ variables, of degree at most
$5n$, and its coefficients $\lambda(n, \alpha, \beta)$ can be
computed in polynomial time. Thus, by Valiant's criterion
we conclude that $(h_n)\in \vnp$.

Assume that $(h_n)$ can be computed by depth $4$ monotone arithmetic
circuits of size $s(n)$. Using (\ref{transfor}), we get that $f_n =
\sum_{i = 1}^k \prod_{j = 1}^m f_{i,j}$ where $f_{i,j} \in \R^+[X]$
 have at most $t$ monomials and $k,m,t$ are $\mathcal
O(s(n))$. Since the degree of $f_n$ is $2^n - 1$, by Theorem
\ref{bound2}, we get that $2^n - 1 \leq kmt$. We conclude that $s(n)
= 2^{\,\Omega(n)}$.

To complete the proof of the theorem, assume that $(h_n)$ can be
computed by monotone arithmetic circuits of size $s(n)$. By
Lemma~\ref{redprof4}, it follows that the polynomials $h_n$ are
computable by depth~4 monotone circuits of size $s'(n) :=
s(n)^{\mathcal O(\sqrt{n})}$. Therefore $s'(n) = 2^{\,\Omega(n)}$
and we finally get that $s(n) = 2^{\,\Omega(\sqrt{n})}$.
\end{proof}

Lower bounds for monotone arithmetic circuits have been known for a
long time (see for instance~\cite{jerrum82,valiant79negation}).
Theorem~\ref{monotone} provides yet another example of a polynomial
family which is hard for monotone arithmetic circuits, with an
apparently new proof method.

\section{Discussion}

As explained in the introduction, log-concavity plays a role
in the study of real roots of polynomials.
In~\cite{Koi10a} bounding the number of real roots of sums of products
of sparse polynomials was suggested as an approach for separating
$\vp$ from $\vnp$. Hrube\v{s}~\cite{Hrubes13} suggested to bound the
multiplicities of roots, and~\cite{KPTT} to bound the number of edges
of Newton polygons of bivariate polynomials.

Theorem~\ref{ifvpvnp} provides another
 plausible approach to $\vp \neq \vnp$: it
suffices to show that if a polynomial $f \in \R^+[X]$  under
form~(\ref{sumprod}) satisfies the Kurtz condition or the stronger
log-concavity condition (\ref{stronglogconcave}) then its degree is
bounded by a ``small'' function of the parameters $k,m,t$.  A degree
bound which is polynomial bound in $k,t$ and $2^m$ would be good
enough to separate $\vp$ from $\vnp$. Theorem~\ref{bound} improves
on the trivial $kt^m$ upper bound when $f$ satisfies the Kurtz
condition, but certainly falls short of this goal: not only is the
bound on $\deg(f)$ too coarse, but we would also need to allow
negative coefficients in the polynomials $f_{i,j}$.
Theorem~\ref{bound2} provides a polynomial bound on $k,m$ and $t$
under a stronger log-concavity condition, but still needs the extra
assumption that the coefficients in the polynomials $f_{i,j}$ are
nonnegative. The unconditional lower bound in Theorem~\ref{monotone}
provides a ``proof of concept'' of this approach for the easier
setting of monotone arithmetic circuits.


\bibliographystyle{plain}
\bibliography{bib}{}


\end{document}